\documentclass[conference]{IEEEtran}
\usepackage[utf8]{inputenc}
\usepackage[style=ieee, maxbibnames=99, minbibnames=99]{biblatex}
\usepackage{amsmath}
\usepackage{amssymb}
\usepackage{systeme}
\usepackage{graphicx}
\usepackage{caption}
\usepackage{cleveref}
\usepackage{bm}
\usepackage[linesnumbered,ruled,vlined]{algorithm2e}
\usepackage{amsthm}
\usepackage{spalign}
\usepackage{amsfonts}
\usepackage{enumitem}
\usepackage{multirow}
\usepackage{array}

\setlength{\intextsep}{0pt}


\graphicspath{ {./images/} }

\interfootnotelinepenalty=10000

\let\oldnl\nl
\newcommand{\nonl}{\renewcommand{\nl}{\let\nl\oldnl}}

\addbibresource{references.bib}

\newtheorem{theorem}{Theorem}

\newcommand{\hvw}{\Hat{v}_w}

\newcommand{\by}{\mathbf{y}}

\newcommand{\bx}{\mathbf{x}}
\newcommand{\bw}{\mathbf{w}}
\newcommand{\bh}{\mathbf{h}}
\newcommand{\bq}{\mathbf{q}}

\newcommand{\br}{\mathbf{r}}
\newcommand{\bp}{\mathbf{p}}
\newcommand{\bz}{\mathbf{z}}

\newcommand{\bg}{\mathbf{g}}

\newcommand{\bzero}{\mathbf{0}}
\newcommand{\bmu}{\bm{\mu}}
\newcommand{\bLambda}{\bm{\Lambda}}
\newcommand{\bA}{\mathbf{A}}
\newcommand{\bI}{\mathbf{I}}
\newcommand{\bU}{\mathbf{U}}
\newcommand{\bV}{\mathbf{V}}
\newcommand{\bS}{\mathbf{S}}
\newcommand{\bW}{\mathbf{W}}
\newcommand{\bH}{\mathbf{H}}
\newcommand{\bs}{\mathbf{s}}

\newcommand{\bZ}{\mathbf{Z}}

\newcommand{\bP}{\mathbf{P}}

\newcommand{\bD}{\mathbf{D}}

\newcommand{\bgamma}{\bm{\gamma}}
\newcommand{\bPsi}{\bm{\Psi}}

\newcommand{\bsigma}{\bm{\sigma}}

\newcommand{\bone}{\bm{1}}

\newcommand{\normDensity}{\mathcal{N}}
\newcommand{\expectation}{\mathbb{E}}

\newcommand{\lambdaDag}{\lambda^{\dagger}}

\newcommand{\limN}{\lim \limits_{N \rightarrow \infty}}
\newcommand{\as}{\overset{a.s.}{=}}
\newcommand{\inv}[1]{\frac{1}{#1}}

\SetKwInput{KwInput}{Input}               
\SetKwInput{KwOutput}{Output}
\SetKwInOut{KwInitialization}{Initialization}
\SetKwBlock{KwBlockA}{Block A}{end}
\SetKwBlock{KwBlockB}{Block B}{end}

\SetKwBlock{KwBlockCGRoutine}{Standard CG routine}{end}
\SetKwBlock{KwBlockCGDivergence}{CG correction}{end}
\SetKwBlock{KwBlockCGMSE}{SE of Block A}{end}

\title{Warm-Starting in Message Passing algorithms}

\IEEEoverridecommandlockouts

\author{\IEEEauthorblockN{Nikolajs Skuratovs, Mike E. Davies}
\IEEEauthorblockA{Institute for Digital Communications, School of Engineering, University of Edinburgh, Edinburgh, EH9 3FG, U.K.}
\thanks{This work was supported by the ERC project C-SENSE (ERC-ADG-2015-694888). MD is also supported by a Royal Society Wolfson Research Merit Award.}}

\begin{document}

\maketitle

\begin{abstract}
    Vector Approximate Message Passing (VAMP) provides the means of solving a linear inverse problem in a Bayes-optimal way assuming the measurement operator is sufficiently random. However, VAMP requires implementing the linear minimum mean squared error (LMMSE) estimator at every iteration, which makes the algorithm intractable for large-scale problems. In this work, we present a class of warm-started (WS) methods that provides a scalable approximation of LMMSE within VAMP. We show that a Message Passing (MP) algorithm equipped with a method from this class can converge to the fixed point of VAMP while having a per-iteration computational complexity proportional to that of AMP. Additionally, we provide the Onsager correction and a multi-dimensional State Evolution for MP utilizing one of the WS methods. Lastly, we show that the approximation approach used in the recently proposed Memory AMP (MAMP) algorithm is a special case of the developed class of WS methods. 
\end{abstract}

\section{Introduction}

Consider recovering a random signal $\bx \in \mathbb{R}^N$ from 

\noindent
\begin{equation}
    \by = \bA \bx + \bw \label{eq:y_measurements}
\end{equation}

\noindent
where $\by \in \mathbb{R}^M$, $\bw \in \mathbb{R}^M$  is a zero-mean i.i.d. Gaussian noise vector $\bw \sim N(0,v_w \bI_M)$ and $\bA \in \mathbb{R}^{M \times N}$ is a measurement matrix that is assumed to be available. We consider the limiting case where $N = \delta M \rightarrow \infty$ with $\frac{M}{N} \rightarrow \delta \in (0,1)$.

In this work we approach the problem of recovering $\bx$ from \eqref{eq:y_measurements} by referring to the family of Approximate Message Passing (AMP) algorithms originally proposed in \cite{AMP}. When $\bA$ is drawn from the class of sub-Gaussian matrices, AMP demonstrates fast convergence rate, stability and existence of a 1D dynamics called \textit{State Evolution (SE)} that defines the evolution of the intrinsic uncertainty in the algorithm \cite{SE_AMP}. In particular, the SE was used to show the Bayes-optimality of the algorithm \cite{AMP_SE_non_separable, AMP_Bayes_ioptimality_1, AMP_Bayes_ioptimality_2} in the large system limit (LSL).

Yet, when the measurement operator $\bA$ is ill-conditioned or non-centered, AMP was evidenced to be unstable and provide poor reconstruction performance \cite{VAMP, AMP_convergence_general_A, AMP_unstable_general_A}. To expand the class of allowed measurement matrices, it was suggested to reformulate AMP as a symmetric procedure involving two steps that follow the orthogonality principle \cite{OAMP}: a denoising step and a linear step. This procedure was proposed in \cite{EP_Keigo}, \cite{OAMP} and \cite{VAMP}, where the latter two works called it \textit{Orthogonal AMP (OAMP)} and \textit{Vector Approximate Message Passing (VAMP)} respectively. In \cite{EP_Keigo, VAMP}, it was shown that OAMP/VAMP\footnote{While the LMMSE-OAMP and VAMP are almost identical, in the following we will stick to the name "VAMP" when we refer to these type of algorithms.} can operate with an ill-conditioned measurement operator and preserves most of the main advantages of AMP, when $\bA$ is right orthogonally invariant (ROI), which is a much wider class of random matrices. In this setup, it was shown that in the large system limit, the dynamics of VAMP can be defined through a 1D SE and the algorithm achieves Bayes optimal reconstruction conditioned on the validity of the replica prediction for ROI matrices \cite{VAMP}. 

Despite the advantages of VAMP, each of its iteration implements an LMMSE estimator that requires inverting an $M$ by $M$ matrix, which is computationally infeasible for large-scale inverse problems. In \cite{VAMP} it was proposed to leverage the SVD decomposition of $\bA$ to reduce the computational cost, but this would require storing large singular vector matrices, which becomes infeasible from the memory point of view. To resolve this computational/memory bottleneck, several Message Passing (MP) techniques were proposed, including \textit{Conjugate Gradient VAMP (CG-VAMP)} \cite{CG_EP, D-VAMP_poster, OurPaper}, \textit{Convolutional AMP (CAMP)} \cite{CAMP_short, CAMP_full}, \textit{Memory AMP (MAMP)} \cite{MAMP_full, MAMP_short} and others. CAMP and MAMP are based on the general OAMP framework \cite{UnifiedSE}, where the optimal LMMSE estimator is approximated by a long-memory (LM) matched filter (LM-MF), which utilizes the information from the previous iterations. In \cite{CAMP_full, MAMP_full} it was shown that the dynamics of these algorithms can be described by an LM version of SE, which was used to prove the Bayes-optimality of the algorithms under the same assumptions as for VAMP. Importantly, these algorithms do not require storing any large dimensional matrices and have the computational complexity similar to that of AMP. While CAMP demonstrates consistent performance for well-to-moderate conditioned operators $\bA$, MAMP was evidenced to be stable for highly ill-conditioned operators \cite{MAMP_full}. However, LM-MF has a fixed quality of approximation and, when $\bA$ is ill-conditioned, the number of iterations required for MAMP to converge might be hundreds and thousands \cite{MAMP_full}, which can have a substantial computational cost especially if the denoiser used in the algorithm is relatively expensive, as with Non-Local Mean \cite{NLM} , BM3D \cite{BM3D} etc.

In this work we upscale VAMP by designing a class of warm-started (WS) first-order methods approximating LMMSE with a flexible number of inner-loop iterations leading to efficient operation of the algorithm. As special cases, this class includes WS Gradient Descent (WS-GD) and WS Conjugate Gradient (WS-CG). We show that when an MP algorithm is equipped with a method from this class, the algorithm converges to the same fixed point as VAMP. We provide a closed-form solution for the Onsager correction and the LM SE for this class of approximators and present a robust method for implementing them. Additionally, we show that the approximation method used in MAMP corresponds to a particular form of WS-GD with one inner-loop iteration. Lastly, we numerically compare the performance of the proposed methods against VAMP, MAMP and regular CG-VAMP.

\section{Long Memory VAMP}

In this section we focus on a particular instance of the general LM MP framework \cite{UnifiedSE} that is used as the basis for MAMP \cite{MAMP_full} and for the proposed algorithm. Let $\br_t \in \mathbb{R}^N$ and $\bs_t \in \mathbb{R}^N$ be iteratively updated as\footnote{In MAMP there are additional scalar terms involved that ensure stability and the convergence properties, but to simplify the exposition we omit them here. However, we utilize the optimal MAMP algorithm \cite{MAMP_full} when it comes to the numerical analysis.}

\noindent
\begin{equation}
    \br_t = \Big(\sum \limits_{\tau=0}^t \gamma_t^{\tau} \Big)^{-1} \Big( \sum_{\tau=0}^t \gamma_t^{\tau} \bs_{\tau} + \bA^T \bmu_t(\bS_t) \Big) \label{eq:r_t}
\end{equation}

\noindent
\begin{equation}
    \bs_{t+1} = \inv{1 - \alpha_t} \Big( \bg( \br_t ) - \alpha_t \br_t \Big) \label{eq:s_t}
\end{equation}


\noindent
where  $\bS_t = \big\{ \bs_t, \bs_{t-1}, ..., \bs_0 \big\}$ and $\bmu_t$ is a long-memory approximation of the LMMSE estimator 

\noindent
\begin{equation}
    \bmu_t^{LMMSE} = \Big( \rho_t \bI + \bA \bA^T \Big)^{-1} \big( \by - \bA \bs_t \big) = \bW_t^{-1} \bz_t  \label{eq:LMMSE}
\end{equation}

\noindent
In \eqref{eq:r_t}-\eqref{eq:s_t}, the scalars $\alpha_t$ and $\gamma_t^{\tau}$ correspond to the following divergences

\noindent
\begin{equation}
    \gamma_t^{\tau} = \inv{N} \nabla_{\bs_{\tau}} \cdot \bA^T \bmu_t(\bS_t) \quad \quad \quad \alpha_t = \inv{N} \nabla_{\br_t} \cdot \bg(\br_t) \label{eq:gamma_alpha_general}
\end{equation}

\noindent
Here and in \eqref{eq:s_t}, the function $\bg$ performs denoising of $\bx$ assuming an instrinsic Gaussian channel

\noindent
\begin{equation}
    \br_t = \bx + \bh_t \quad \quad \quad (\bh_t)_n \sim \normDensity(0, v_{h_t}) \label{eq:h_t}
\end{equation}

\noindent
Lastly, in \eqref{eq:LMMSE}  we have the scalar $\rho_t = \frac{v_w}{v_{q_t}}$ with $v_{q_t}$ modeling the variance $\inv{N} \expectation\big[ ||\bq_t||^2 \big]$ of the error vector $\bq_t = \bs_t - \bx$. 

With these definitions, one can establish rigorous dynamics of \eqref{eq:r_t}-\eqref{eq:s_t} in the LSL and under the following main assumptions:

\textbf{Assumption 1}: The matrix $\bA$ is right-orthogonally invariant (ROI) such that in the SVD of $\bA = \bU \bS \bV^T$, the matrix $\bV$ is independent of other random terms and is Haar distributed \cite{rand_mat_methods_book}. Additionally, let the limiting spectral distribution of $\bLambda = \bS \bS^T$ almost surely converge to a compactly supported function and be normalized such that $\inv{N} Tr \{\bS \bS^T\} = 1$.

\textbf{Assumption 2}: The sequence of functions $\bg: \mathbb{R}^N \mapsto \mathbb{R}^N$ indexed by $N$ are Lipschitz continuous with a Lipschitz constant $L_N < \infty$ as $N \rightarrow \infty$ \cite{NS-VAMP}, \cite{AMP_SE_non_separable}. Additionally, we assumes the sequences of $\bx$ and $\bg(\cdot)$ are \textit{convergent under Gaussian noise} \cite{NS-VAMP}.

Under these assumptions, one can ensure that the intrinsic model \eqref{eq:h_t} holds almost surely and that the error vectors $\bq_t$ and $\bh_t$ are asymptotically orthogonal \cite{UnifiedSE}

\noindent
\begin{equation}
    \limN \inv{N} \bq_t^T \bh_t \as 0 \label{eq:h_q_independence}
\end{equation}

\section{Long-memory LMMSE approximation} \label{sec:LM_LMMSE_approximators}

As seen from \eqref{eq:LMMSE}, the direct implementation of LMMSE requires inverting an $M$ by $M$ matrix. To overcome this issue, first, we reformulate $\bmu_t^{LMMSE}$ as the solution to the following system of linear equations (SLE)

\noindent
\begin{equation}
    \bW_t \bmu_t = \bz_t \label{eq:SLE}
\end{equation}

\noindent
Next, we define a quadratic cost function \cite{CG_properties, Painless_CG, Optimization_polyk}

\noindent
\begin{equation}
    f_t(\bmu) = \inv{2} \bmu^T \bW_t \bmu - \bz_t^T \bmu \label{eq:cost_function} 
\end{equation}

\noindent
By setting $\nabla f_t(\bmu) = \bW_t \bmu - \bz_t = 0$, one can verify that $f_t(\bmu)$ is minimized by the LMMSE solution \eqref{eq:LMMSE}. Then, by constructing an iterative method for minimizing \eqref{eq:cost_function}, one can potentially recover $\bmu_t^{LMMSE}$. This idea was previously considered in the similar works including \cite{CG_EP,D-VAMP_poster,OurPaper} but the approximation methods there did not utilize the \textit{warm-start (WS)} approach. In this work, we consider applying WS to the following iterative scheme

\noindent
\begin{equation}
    \bmu_t^{i+1} = \bmu_t^i + a_t^i \bp_t^i \quad \quad \bp_t^{i+1} = -\nabla f_t(\bmu_t^{i+1}) + b_t^i \bp_t^i \label{eq:general_iterative_scheme}
\end{equation}

\noindent
where $\bmu_t^i$ is the approximation of \eqref{eq:LMMSE} after $i$ iterations, $\bp_t^i$ is the new search direction and $a_t^i$ is the step size along $\bp_t^i$. One special case of \eqref{eq:general_iterative_scheme} is the \textit{Gradient Descent (GD)} algorithm \cite{Painless_CG, Optimization_polyk} for which we set $b_t^i = 0$ for all $i$ and choose a fixed step size $a_t^i = a_t$ to be

\noindent
\begin{equation}
    a_t = \frac{2}{L_{max}(\bW_t) + L_{min}(\bW_t)} = \inv{\rho_t + \lambda^{\dagger}} \label{eq:optimal_GD_step_size}
\end{equation}

\noindent
where $L_{min}(\bW_t)$ and $L_{max}(\bW_t)$ are the minimum and maximum eigenvalues of $\bW_t$ and $\lambdaDag = \frac{L_{max}(\bS \bS^T) + L_{min}(\bS \bS^T)}{2}$. Similarly, by defining

\noindent
\begin{equation}
    a_t^i = \frac{||\bz_t - \bW_t \bmu_t^i||^2}{ (\bp_t^i)^T \bW_t \bp_t^i} \quad \quad b_t^i = \frac{|| \bz_t - \bW_t \bmu_t^{i+1} ||^2}{|| \bz_t - \bW_t \bmu_t^i ||^2} \label{eq:CG_scalars}
\end{equation}

\noindent
one obtains the \textit{Conjugate Gradient (CG)} method. One can show \cite{Painless_CG, Optimization_polyk} that CG is the optimal first-order method for minimizing \eqref{eq:cost_function} and has a substantially faster convergence rate compared to GD when the matrix $\bW_t$ is ill-conditioned. 

Although both of the methods are guaranteed to converge to the exact solution \eqref{eq:LMMSE} \cite{Optimization_polyk}, the number of inner-loop iterations required for that is $i=M$ for CG and $i \rightarrow \infty$ for GD, which is as expensive as directly computing \eqref{eq:LMMSE}. To keep the number of inner-loop iterations of \eqref{eq:general_iterative_scheme} of order $O(1)$ while preserving the ability to recover the exact LMMSE solution, we suggest combining a small number of inner loop iterations with the following warm-start at each outer loop iteration

\noindent
\begin{equation}
    \bmu_t^0 = \bmu_{t-1}^i  \quad \quad \bp_t^{0} = -\nabla f_t(\bmu_{t-1}^i) +  b_{t-1}^{i-1} \bp_{t-1}^{i-1} \label{eq:general_initialization}
\end{equation}

\noindent
and $\bmu_0^0 = \bzero$, $\bp_0^0 = \bz_t$ for $t=0$. The following theorem establishes the asymptotic property of an MP algorithm utilizing \eqref{eq:general_iterative_scheme} with the initialization \eqref{eq:general_initialization}.

\begin{theorem} \label{theorem:WS_iterative_approximation_convergence}

    Consider the MP algorithm \eqref{eq:r_t}-\eqref{eq:s_t} with $\bmu_t = \bmu_t^i$ being the output of the iterations \eqref{eq:general_iterative_scheme} with $i>0$ and the initialization \eqref{eq:general_initialization}. Assume that there is such $i^{\ast}$ so that $\bmu_t^{i^{\ast}}$ minimizes $f_t(\cdot)$ and that the MP algorithm \eqref{eq:r_t}-\eqref{eq:s_t} has a unique fixed point achieved at iteration $t^{\ast}$. Then, under Assumptions 1-2, the vector $\bmu_{t^{\ast}}$ converges to $\bmu_{t^{\ast}}^{LMMSE}$.
    
\end{theorem}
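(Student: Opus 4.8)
The plan is to exploit the defining self-consistency of a fixed point: at $t^{\ast}$ the outer variables stop changing, which freezes the inner optimization problem, and then to show that the warm-start makes a frozen inner problem behave like a single uninterrupted run of \eqref{eq:general_iterative_scheme}. First I would record that at the fixed point $\bs_t$, $\br_t$ and the scalar $\rho_t = v_w/v_{q_t}$ stabilize, so that both $\bW_t = \rho_t \bI + \bA \bA^T$ and $\bz_t = \by - \bA \bs_t$---and hence the quadratic $f_t$ in \eqref{eq:cost_function}---become stationary for $t \geq t^{\ast}$. This is where Assumptions 1--2 enter: they guarantee, in the LSL and almost surely, that the state quantities entering $\bW_t$ and $\bz_t$ converge, so the frozen cost $f_{t^{\ast}}$ is well defined.

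The technical heart is the \emph{warm-start equivalence}. I would show that, once $f_t = f_{t^{\ast}}$ is fixed, the initialization \eqref{eq:general_initialization} reconstructs exactly the iterate and direction an uninterrupted run of \eqref{eq:general_iterative_scheme} on $f_{t^{\ast}}$ would produce. Comparing \eqref{eq:general_iterative_scheme} and \eqref{eq:general_initialization}, the continued direction would be $\bp_{t-1}^i = -\nabla f_{t-1}(\bmu_{t-1}^i) + b_{t-1}^{i-1} \bp_{t-1}^{i-1}$, while the warm-start sets $\bp_t^0 = -\nabla f_t(\bmu_{t-1}^i) + b_{t-1}^{i-1} \bp_{t-1}^{i-1}$; when the cost is frozen these coincide, so $\bp_t^0 = \bp_{t-1}^i$ and $\bmu_t^j = \bmu_{t-1}^{i+j}$ along the equivalent single run. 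Each outer step thus appends $i$ further iterations to the same minimization. For WS-CG this requires verifying that $\bW_{t^{\ast}}$-conjugacy of the search directions is preserved across the outer-iteration boundary; for WS-GD the claim is immediate since only the current gradient is used.

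With the equivalence in hand, the conclusion follows from stationarity plus the descent property. Since $\bmu_t$ enters $\br_t$ through \eqref{eq:r_t}, the fixed-point assumption forces $\bmu_{t^{\ast}+1} = \bmu_{t^{\ast}}$. But by the equivalence the map $\bmu_{t^{\ast}} \mapsto \bmu_{t^{\ast}+1}$ consists of $i$ genuine steps of \eqref{eq:general_iterative_scheme} applied to $f_{t^{\ast}}$, and such a step strictly decreases $f_{t^{\ast}}$ unless the current iterate already satisfies $\nabla f_{t^{\ast}}(\bmu_{t^{\ast}}) = \bzero$. The assumed existence of $i^{\ast}$ guarantees the minimizer is attainable by the scheme, so invariance can occur only at the minimizer. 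Setting $\nabla f_{t^{\ast}}(\bmu_{t^{\ast}}) = \bW_{t^{\ast}} \bmu_{t^{\ast}} - \bz_{t^{\ast}} = \bzero$ and comparing with \eqref{eq:LMMSE}--\eqref{eq:SLE} yields $\bmu_{t^{\ast}} = \bW_{t^{\ast}}^{-1} \bz_{t^{\ast}} = \bmu_{t^{\ast}}^{LMMSE}$.

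I expect the warm-start equivalence of the second paragraph to be the main obstacle, both because the CG conjugacy bookkeeping across restarts must be made exact rather than approximate, and because the equivalence holds only when the inner cost is genuinely frozen---so the argument must be phrased in the almost-sure LSL sense, where $\bW_t \to \bW_{t^{\ast}}$ and $\bz_t \to \bz_{t^{\ast}}$ rather than equalling them at finite $N$ and finite $t$.
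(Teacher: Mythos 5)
Your core mechanism is exactly the paper's: once the MP iteration sits at its fixed point, $\bW_t$ and $\bz_t$ (hence $f_t$ and the scalars $a_t^j$, $b_t^j$) are frozen, and the warm start \eqref{eq:general_initialization} then makes each additional outer-loop iteration append $i$ genuine inner iterations to a single uninterrupted run of \eqref{eq:general_iterative_scheme}, i.e. $\bmu_{(t^{\ast}+1)}^i = \bmu_{t^{\ast}}^{2i}$, and so on. Where you diverge is the concluding step, and that is where there is a genuine gap. You close by asserting that the fixed-point assumption ``forces $\bmu_{t^{\ast}+1} = \bmu_{t^{\ast}}$'' (because $\bmu_t$ enters $\br_t$) and then invoke strict descent to conclude that invariance can occur only at the minimizer. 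Neither half of this is safe. First, the fixed point is a statement about $\br_t$ and $\bs_t$, not about $\bmu_t$: the map from $\bmu_t$ to $\br_t$ in \eqref{eq:r_t} is not injective --- the paper's own MAMP discussion shows $\br_t$ is invariant under scalar rescaling of $\bmu_t$, since the divergences $\gamma_t^{\tau}$ rescale accordingly --- so a stationary $\br_{t^{\ast}}$ does not pin down $\bmu_{t^{\ast}+1} = \bmu_{t^{\ast}}$. Second, strict per-step descent is a property of GD with the step \eqref{eq:optimal_GD_step_size} and of CG with \eqref{eq:CG_scalars}, but the theorem is stated for the whole class \eqref{eq:general_iterative_scheme} with arbitrary $a_t^i$, $b_t^i$; for a general member of that class a step need not decrease $f_{t^{\ast}}$, so ``invariance only at the minimizer'' does not follow.

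The paper's own conclusion avoids both issues and is more elementary: having established the uninterrupted-run equivalence, it simply counts iterations. By hypothesis the scheme run on the frozen cost reaches the minimizer at inner iteration $i^{\ast}$; therefore after at most $i^{\ast}/i$ further outer-loop iterations the algorithm produces $\bmu_{t^{\ast}}^{i^{\ast}}$, which minimizes $f_{t^{\ast}}$ and hence equals $\bmu_{t^{\ast}}^{LMMSE}$ by \eqref{eq:SLE}. No invariance of $\bmu_t$ and no descent property is needed --- the existence of $i^{\ast}$ is used directly as an attainability statement, not as a lever for a monotonicity argument. If you replace your third paragraph with this counting step, your proof matches the paper's.
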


\begin{proof}
    Let the MP algorithm \eqref{eq:r_t}-\eqref{eq:s_t} converge to a unique fixed point at iteration $t^{\ast}$. Consider the iteration $t^{\ast}+1$ and use \eqref{eq:general_initialization} to obtain
    
    \noindent
    \begin{equation*}
        \bmu_{(t^{\ast}+1)}^0 = \bmu_{t^{\ast}}^i  \quad \quad \bp_{(t^{\ast}+1)}^0 = -\nabla f_{t^{\ast}+1}(\bmu_{t^{\ast}}^{i}) + b_{t^{\ast}}^{i-1} \bp_{t^{\ast}}^{i-1}
    \end{equation*}
    
     \noindent
    Because $\bW_{t^{\ast}+1} = \bW_{t^{\ast}}$ and $\bz_{t^{\ast}+1} = \bz_{t^{\ast}}$, we have that $\nabla f_{t^{\ast}+1}(\cdot) = \nabla f_{t^{\ast}}(\cdot)$, $a_{t^{\ast}+1}^j = a_{t^{\ast}}^j$ and $b_{t^{\ast}+1}^j = b_{t^{\ast}}^j$. Therefore, by running the iterative scheme \eqref{eq:general_iterative_scheme} for $i$ iterations, we produce $\bmu_{(t^{\ast}+1)}^i = \bmu_{(t^{\ast})}^{2i}$ and $\bp_{(t^{\ast}+1)}^i = \bp_{(t^{\ast})}^{2i}$. Then, after at most $\frac{i^{\ast}}{i}$ outer-loop iterations, we obtain $\bmu_{t^{\ast}}^{i^{\ast}}$, which, by the assumption, corresponds to $\bmu_{t^{\ast}}^{LMMSE}$. \end{proof}

\subsection*{MAMP}

The recently proposed MAMP algorithm \cite{MAMP_full} approximates \eqref{eq:LMMSE} with a LM approximation method based on Taylor expansion of the matrix $\bW_t$. Using the ideas above, we can show that this approximation method corresponds to a particular WS-GD scheme with $i=1$ inner-loop iteration. In particular, in MAMP \eqref{eq:LMMSE} is approximated as \cite{MAMP_full}

\noindent
\begin{equation}
    \bmu_t = \big( \bI - a_t \bW_t \big) \bmu_{t-1} + \Bar{a}_t \bz_t \label{eq:MAMP_mu}
\end{equation}

\noindent
where $a_t$ is the same as in \eqref{eq:optimal_GD_step_size}, while $\Bar{a}_t$ is optimized to minimize the variance $v_{h_t}$ from \eqref{eq:h_t}. To relate \eqref{eq:MAMP_mu} to \eqref{eq:general_iterative_scheme}, we let $\Bar{a}_t = c_t a_t$, where $c_t$ is some scalar, and add the corresponding superscript $i$. Then, after rearranging the terms in \eqref{eq:MAMP_mu}, one can obtain

\noindent
\begin{equation}
    \bmu_t^1 = \bmu_{t-1}^0 + a_t^0 \bp_t^0 \quad \quad \bp_t^0 = - \big( \bW_t \bmu_{t-1}^0 - c_t \bz_t \big) \label{eq:MAMP_mu_rearranged}
\end{equation}

\noindent
By comparing \eqref{eq:MAMP_mu_rearranged} to \eqref{eq:general_iterative_scheme}, we conclude that this update corresponds to WS-GD with $i=1$ approximating the cost

\noindent
\begin{equation}
    f_t(\bmu) = \inv{2} \bmu^T \bW_t \bmu - c_t \bz_t^T \bmu \label{eq:MAMP_cost}
\end{equation}

\noindent
that has a minimum at $c_t \bW_t^{-1} \bz_t$. By setting $\bmu_t^0 = \bmu_{t-1}^1$ for $t>0$ and $\bmu_0^0 = \bzero$, $b_t^0 = 0$ and $a_t^0 = \inv{\rho_t + \lambdaDag}$, and applying Theorem \ref{theorem:WS_iterative_approximation_convergence} to \eqref{eq:MAMP_mu_rearranged} and \eqref{eq:MAMP_cost}, we confirm that $\bmu_t$ converges to $\bmu_{t^{\ast}} = c_{t^{\ast}} \bW_{t^{\ast}}^{-1} \bz_{t^{\ast}} = c_{t^{\ast}} \bmu_{t^{\ast}}^{LMMSE}$. However, due to the normalization scalars $\gamma_t^{\tau}$ from \eqref{eq:gamma_alpha_general}, we can show that the update $\br_t$ is invariant to scalar scaling of $\bmu_t$. Let $\bmu_t = c_t \bmu_t^i$ be the approximation of $\bmu_t^{LMMSE}$ used in $\br_t$ and $\gamma_t^{\tau, i} = \inv{N} \nabla_{\bs_{\tau}} \cdot \bA^T \bmu_t^i$. Then $\gamma_t^{\tau}$ is equivalent to

\noindent
\begin{equation*}
    \gamma_t^{\tau} = \inv{N} \nabla_{\bs_{\tau}} \cdot \bA^T \bmu_t(\bS_t) = \inv{N} \nabla_{\bs_{\tau}} \cdot \bA^T c_t \bmu_t^i = c_t \gamma_t^{\tau, i} 
\end{equation*}

\noindent
Substituting this into \eqref{eq:r_t} leads to

\noindent
\begin{equation*}
    \br_t = \frac{ \sum_{\tau=0}^t \gamma_t^{\tau} \bs_{\tau} + \bA^T \bmu_t }{\sum \limits_{\tau=0}^t \gamma_t^{\tau}} = \frac{ \sum_{\tau=0}^t \gamma_t^{\tau, i} \bs_{\tau} + \bA^T \bmu_t^i }{\sum \limits_{\tau=0}^t \gamma_t^{\tau, i}}
\end{equation*}

\noindent
which is the same as if $\bmu_t = \bmu_t^i$ without the scaling by $c_t$. Thus, the fixed point $\br_{t^{\ast}}$ of MAMP is the same with either \eqref{eq:MAMP_mu_rearranged} or \eqref{eq:general_iterative_scheme}-\eqref{eq:optimal_GD_step_size}.

\section{Asymptotic behaviour of WS methods within Message Passing}

While Theorem \ref{theorem:WS_iterative_approximation_convergence} suggests that a LM MP algorithm utilizing a WS method \eqref{eq:general_iterative_scheme} and \eqref{eq:general_initialization} has a potential to converge to the fixed point of VAMP, it is still required to identify the closed-form solutions for the correction scalars $\gamma_t^{\tau, i}$ and the evolution model of $v_{h_t}^i$. Next, we provide these closed-form solutions for any WS method that fits into the form of \eqref{eq:general_iterative_scheme} such as WS-CG, WS-GD (including the one from MAMP) and potentially other methods. To do this, define the following base sequences

\noindent
\begin{gather}
    f^{t,i}[j] = f^{t,i-1}[j] + a_t^i g^{t,i-1}[j] \label{eq:f_t_z} \\
    g^{t,i}[j] = - \rho_t f^{t,i}[j] - f^{t,i}[j-1] + b_t^i g^{t,i-1}[j] \label{eq:g_t_z}
\end{gather}

\noindent
where $a_t^i$ and $b_t^i$ depend on the chosen algorithm and $\rho_t$ is as in \eqref{eq:LMMSE}. The following table defines the three pairs of additional sequences for $i=1,2,...$, where $\delta_{j,0}$ is the Kronecker delta function.

\noindent
\begin{minipage}{\linewidth}
\centering
\captionof{table}{Supplementary sequences}
\begin{tabular}{|m{8.5em}| c| m{6.6em}||} 
    \hline
    Sequence definition & Initialization & Boundaries \\
    \hline\hline
    $f_{\bz}^{t,i}[j]=f^{t,i}[j]$ & $f_{\bz}^{t,0}[0]=0$ & $j \in \{0,...,i \}$ \\ 
    \hline
    $g_{\bz}^{t,i}[j]=g^{t,i}[j] + \delta_{j,0}$ & $g_{\bz}^{t,0}[0]=1$ & $j \in \{0,...,i \}$ \\
    \hline \hline
    $f_{\bmu}^{t,i}[j]=f^{t,i}[j]$ & $f_{\bmu}^{t,0}[0]=1$ & $j \in \{0,...,i \}$ \\
    \hline
    $g_{\bmu}^{t,i}[j] = g^{t,i}[j]$ & $\begin{matrix} g_{\bmu}^{t,0}[0]=-\rho_t \\ g_{\bmu}^{t,0}[1]=-1 \end{matrix}$ & $j \in \{0,...,i+1 \}$ \\
    \hline \hline
    $f_{\bp}^{t,i}[j]=f^{t,i}[j]$ & $f_{\bp}^{t,0}[0]=0$ & $j \in \{0,...,i \}$ \\ 
    \hline 
    $g_{\bp}^{t,i}[j] = g^{t,i}[j]$ & $g_{\bp}^{t,0}[0]=b_{t-1}^{i-1}$ & $j \in \{0,...,i \}$ \\ 
    \hline
\end{tabular}
\end{minipage}
    
\smallskip    
    
With these definitions, we are ready to set up the main sequences. For $0 \leq \tau < t$, let $r_{\tau}^{t,d}[k]$ and $u_{\tau}^{t,d}[k]$ be defined as 

\noindent
\begin{gather*}
     r_{\tau}^{t,d}[k] = \sum_{j=0}^{d}  f_{\bmu}^{t,d}[j] r_{\tau}^{t-1,d}[k-j] + f_{\bp}^{t,d}[j] u_{\tau}^{t-1,i-1}[k-j] \label{eq:r_t_lemma} \\
     u_{\tau}^{t,d}[k] = \sum_{j=0}^{d} g_{\bmu}^{t,d}[j] r_{\tau}^{t-1,d}[k-j] + g_{\bp}^{t,d}[j] u_{\tau}^{t-1,i-1}[k-j] \label{eq:u_t_lemma}
\end{gather*}

\noindent
with $r_{\tau}^{t,i}[k]=0$ and $u_{\tau}^{t,i}[k]=0$ for $k \not\in \{0,...,(t-\tau)i\}$. For $\tau=t$ we set $r_t^{t,i}[k] = f_{\bz}^{t,i}[k]$ and $u_t^{t,i}[k] = g_{\bz}^{t,i}[k]$. The following theorem establishes an equivalent model of $\bmu_t^i$ from \eqref{eq:general_iterative_scheme} using the above definitions.

\begin{theorem} \label{theorem:bz_tau_to_mu}
    
    \textit{\cite{OurFullPaper}}: Let $\bmu_t^i$ be the output of iterations \eqref{eq:general_iterative_scheme} with the initializations \eqref{eq:general_initialization}. Then
    
    \noindent
    \begin{equation}
        \bmu_t^i = \sum_{\tau = 0}^{t}  \sum_{k = 0}^{(t - \tau) i} r_{\tau}^{t,i}[k] \big( \bA \bA^T \big)^k \bz_{\tau} \label{eq:bz_tau_to_bmu}\\
    \end{equation}
    
\end{theorem}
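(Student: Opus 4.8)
The plan is to prove the representation by a nested induction (outer loop on $t$, inner loop on $i$), carried jointly with a companion representation for the search direction,
\[
\bp_t^i = \sum_{\tau=0}^t \sum_{k} u_\tau^{t,i}[k] \big(\bA\bA^T\big)^k \bz_\tau ,
\]
since $\bmu_t^i$ and $\bp_t^i$ are produced together by \eqref{eq:general_iterative_scheme}. The single observation that makes everything work is that $\nabla f_t(\bmu)=\bW_t\bmu-\bz_t$ with $\bW_t=\rho_t\bI+\bA\bA^T$ from \eqref{eq:LMMSE}, so the only vector operations appearing in \eqref{eq:general_iterative_scheme}--\eqref{eq:general_initialization} are scaling, addition, injection of $\bz_t$, and multiplication by $\bW_t$. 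Because $\bW_t\,(\bA\bA^T)^k=\rho_t(\bA\bA^T)^k+(\bA\bA^T)^{k+1}$, each such operation keeps $\bmu_t^i$ and $\bp_t^i$ inside the span of $\{(\bA\bA^T)^k\bz_\tau\}$ and raises the power of $\bA\bA^T$ by at most one. Hence a finite expansion of the claimed form exists a priori; the real content is to show that its coefficients are exactly $r_\tau^{t,i}[k]$ and that the degree does not exceed $(t-\tau)i$.

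First I would treat the inner loop at a fixed $t$. Substituting the expansions of $\bmu_t^i,\bp_t^i$ into $\bmu_t^{i+1}=\bmu_t^i+a_t^i\bp_t^i$ reproduces, coefficient by coefficient, the update in \eqref{eq:f_t_z}; substituting into $\bp_t^{i+1}=-\bW_t\bmu_t^{i+1}+\bz_t+b_t^i\bp_t^i$ and using the shift rule for $\bW_t$ reproduces \eqref{eq:g_t_z}, where the freshly injected $\bz_t$ accounts for the Kronecker term $\delta_{j,0}$ in the definition of $g_\bz$ and the shift $-f^{t,i}[j-1]$ comes from the $(\bA\bA^T)^{k+1}$ part. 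This identifies the base sequences $f^{t,i},g^{t,i}$ as the impulse response of $i$ inner iterations and shows that the three supplementary pairs in the table encode the three sources seeding each outer step.

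Next I would run the induction on $t$. For $t=0$ the initialization $\bmu_0^0=\bzero$, $\bp_0^0=\bz_0$ matches $f_\bz^{0,0}[0]=0$ and $g_\bz^{0,0}[0]=1$, and the inner-loop step above closes the base case. For the step $t-1\to t$, I would feed the assumed expansions of $\bmu_{t-1}^i$ (coefficients $r_\tau^{t-1,i}$) and $\bp_{t-1}^{i-1}$ (coefficients $u_\tau^{t-1,i-1}$) into the warm start \eqref{eq:general_initialization}. The identity $\bmu_t^0=\bmu_{t-1}^i$ supplies the seed $f_\bmu^{t,0}[0]=1$; the term $-\bW_t\bmu_{t-1}^i$ supplies $g_\bmu^{t,0}[0]=-\rho_t$ and $g_\bmu^{t,0}[1]=-1$; the carried direction $b_{t-1}^{i-1}\bp_{t-1}^{i-1}$ supplies $g_\bp^{t,0}[0]=b_{t-1}^{i-1}$ and $f_\bp^{t,0}[0]=0$; and the fresh $\bz_t$ is assigned to the diagonal $\tau=t$ tracked by $f_\bz,g_\bz$. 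Convolving these seeds through $i$ inner iterations and summing the $\bmu$- and $\bp$-contributions then yields exactly the main recursion for $r_\tau^{t,i}[k]$ and $u_\tau^{t,i}[k]$, completing the induction.

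Finally, the degree bound and the expected obstacle. The support statement $k\le(t-\tau)i$ for $\tau<t$ would follow from the same convolution: $r_\tau^{t,i}$ is the convolution in $k$ of the base profiles $f_\bmu^{t,i},f_\bp^{t,i}$ (supported on $j\le i$) with the previous coefficients $r_\tau^{t-1,i},u_\tau^{t-1,i-1}$ (supported on $k\le(t-1-\tau)i$ by hypothesis), so the support grows by at most $i$ per outer step. I expect the delicate part to be precisely this bookkeeping at the seam between outer iterations: aligning the convolution indices, matching the mismatched inner indices of the warm start ($\bmu_{t-1}^i$ versus $\bp_{t-1}^{i-1}$), and verifying that the fresh $\bz_t$ injection and the boundary convention $r_\tau^{t,i}[k]=0$ outside the stated range are consistent with the diagonal initialization $r_t^{t,i}[k]=f_\bz^{t,i}[k]$, $u_t^{t,i}[k]=g_\bz^{t,i}[k]$. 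The underlying linear algebra is routine; the combinatorial accounting of coefficients is where the care is needed.
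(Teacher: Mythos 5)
The paper does not actually prove this theorem in-document: it is stated with a citation to the full-length companion work \cite{OurFullPaper} and no derivation is given here, so there is no paper proof to compare yours against. Judged on its own merits, your plan is the right one and is visibly what the paper's definitions were built for: the joint induction over $(t,i)$ carrying a companion expansion $\bp_t^i=\sum_{\tau,k}u_\tau^{t,i}[k](\bA\bA^T)^k\bz_\tau$, the observation that $\bW_t(\bA\bA^T)^k=\rho_t(\bA\bA^T)^k+(\bA\bA^T)^{k+1}$ makes every operation in \eqref{eq:general_iterative_scheme}--\eqref{eq:general_initialization} act on coefficient sequences as a scalar combination plus a shift in $k$, the identification of \eqref{eq:f_t_z}--\eqref{eq:g_t_z} as the inner-loop coefficient dynamics (with the repeated $\delta_{j,0}$ injection in $g_{\bz}$ coming from the explicit $\bz_t$ term in $-\nabla f_t$), and the reading of the three seed pairs in the supplementary-sequence table as impulse responses of the $\bmu_{t-1}^i$, $\bp_{t-1}^{i-1}$ and $\bz_t$ sources, convolved against the previous coefficients by linearity and shift-invariance of the inner-loop map. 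All of that is correct.

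The one step in your sketch that does not close is the degree bound, and it fails exactly where you predicted care was needed. Your induction step (kernels supported on $j\le i$, previous coefficients on $k\le(t-1-\tau)i$, hence new coefficients on $k\le(t-\tau)i$) is sound, but its base case is false: the diagonal seed is $r_t^{t,i}[k]=f_{\bz}^{t,i}[k]$, whose support reaches $k=i-1$ (each inner iteration multiplies by $\bW_t$ once, so after $i$ inner iterations $\bz_t$ enters $\bmu_t^i$ with powers $0,\dots,i-1$), not $k\le(t-\tau)i\big|_{\tau=t}=0$. Feeding this corrected base case through your own convolution argument gives a true maximal power of $(t-\tau)i+i-1$ for the $\bz_\tau$ block, which agrees with the limit $(t-\tau)i$ appearing in \eqref{eq:bz_tau_to_bmu} only when $i=1$ (the MAMP case). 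As printed, the display therefore drops the top $i-1$ powers of every $\bz_\tau$ whenever $i>1$; already at $t=0$ its right-hand side reduces to $r_0^{0,i}[0]\,\bz_0$, while $\bmu_0^i$ genuinely contains $(\bA\bA^T)^{i-1}\bz_0$. This inconsistency is inherited from the statement itself (and from the convention ``$r_\tau^{t,i}[k]=0$ for $k\notin\{0,\dots,(t-\tau)i\}$'') rather than introduced by you, but a complete proof cannot simply assert the stated truncation: it must replace the summation limits by the correct support, e.g.\ $(t-\tau+1)i-1$, and establish the representation over that range.
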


This result can be used to work out the evolution of $v_{h_t}^i$, which in the limit corresponds to

\noindent
\begin{equation}
    v_{h_t}^i \as \lim_{N \rightarrow \infty} \frac{||\br_t^i - \bx||^2}{N} \overset{(a)}{=} \lim_{N \rightarrow \infty} \frac{\Big|\Big| \sum \limits_{\tau=0}^t \gamma_t^{\tau,i} \bq_{\tau} + \bA^T \bmu_t^i \Big| \Big|^2}{C_t^2 N} \nonumber\\
\end{equation}

\noindent
\begin{equation}
    = \lim_{N \rightarrow \infty} \frac{\big| \big| \sum \limits_{\tau=0}^t \gamma_t^{\tau,i} \bq_{\tau} \big| \big|^2 +2 \sum \limits_{\tau=0}^t \gamma_t^{\tau,i} \bq_{\tau}^T \bA^T \bmu_t^i +\big| \big| \bA^T \bmu_t^i \big| \big|^2}{C_t^2 N} \nonumber\\
\end{equation}

\noindent
\begin{equation}
    \as \inv{C_t^2} \Big( \lim_{N \rightarrow \infty} \frac{\big| \big|\bA^T \bmu_t^i \big| \big|^2}{N} - \sum_{\tau, \tau' = 0}^t \psi_{\tau,\tau'} \gamma_t^{\tau,i} \gamma_t^{\tau',i} \Big) \label{eq:v_h_WS}
\end{equation}


\noindent
where we defined $C_t = \sum \limits_{\tau=0}^t \gamma_t^{\tau, i}$ and $\psi_{\tau, \tau'} = \lim \limits_{N \rightarrow \infty} \inv{N} \bq_{\tau}^T \bq_{\tau'}$. Here, the step (a) comes from \eqref{eq:r_t}, while the last step in \eqref{eq:v_h_WS} is due to the following asymptotic identity \cite{UnifiedSE}

\noindent
\begin{equation}
    \limN \inv{N} \bq_{\tau}^T \bA^T \bmu_t^i \as - \sum_{\tau = 0}^t \psi_{t,\tau} \gamma_t^{\tau,i} \label{eq:q_A_mu_WS_CG}
\end{equation}

\noindent
Then, by substituting the result from Theorem \ref{theorem:bz_tau_to_mu} into \eqref{eq:v_h_WS}, one can obtain the closed-form solution for $v_{h_t}^i$. Similarly, one can plug the identity for $\bmu_t^i$ into \eqref{eq:gamma_alpha_general} to obtain the solution for $\gamma_t^{\tau,i}$. This idea is presented in the following theorem.

\begin{theorem} \label{theorem:WS_CG_correction_and_SE}
    
    \textit{\cite{OurFullPaper}}: Let $\bmu_t^i$ be the output of iterations \eqref{eq:general_iterative_scheme} with the initializations \eqref{eq:general_initialization}. Then, under Assumptions 1-2, the scalar $\gamma_t^{\tau,i}$ from \eqref{eq:gamma_alpha_general} is equivalent to
     
    \noindent
    \begin{equation}
        \limN \gamma_t^{\tau,i} \overset{a.s.}{=} \sum_{k = 0}^{(t - \tau) i} r_{\tau}^{t,i}[k] \chi_{k+1} \label{eq:gamma_WS_CG}
    \end{equation}
    
    \noindent
    where 
    
    \noindent
    \begin{equation}
        \chi_j = \limN \inv{N} Tr \Big\{ \big( \bS \bS^T \big)^j \Big\} \label{eq:chi}
    \end{equation}
    
    \noindent
    Additionally, the variance $v_{h_t}^i$ from \eqref{eq:v_h_WS} evolves as
    
    \noindent
    \begin{equation}
        v_{h_t}^i \overset{a.s.}{=} \Big(\sum_{\tau = 0}^t \gamma_t^{\tau,i}\Big)^{-2} \Big(\Omega_t^i - \sum_{\tau, \tau' = 0}^t \gamma_t^{\tau,i} \gamma_t^{\tau',i} \psi_{\tau,\tau'}\Big)  \label{eq:v_h_WS_CG}
    \end{equation}
    
    \noindent
    with
    
    \noindent
    \begin{equation}
        \Omega_t^i = \sum_{\tau,\tau' = 0}^{t} \sum_{j = 0}^{(t - \tau) i} \sum_{k = 0}^{(t - \tau') i} r_{\tau}^{t,i}[j] r_{\tau'}^{t,i}[k] \big(\psi_{\tau,\tau'} \chi_{j+k+2} \nonumber\\
    \end{equation}
    
    \noindent
    \begin{equation}
        + v_w \chi_{j+k+1} \big) \label{eq:Omega}
    \end{equation}
    

\end{theorem}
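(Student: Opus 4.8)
The plan is to prove both closed-form expressions by inserting the polynomial representation of $\bmu_t^i$ from Theorem \ref{theorem:bz_tau_to_mu} into the definition \eqref{eq:gamma_alpha_general} of $\gamma_t^{\tau,i}$ and into the expression \eqref{eq:v_h_WS} for $v_{h_t}^i$, and then reducing every resulting quadratic form to the two quantities that survive in the LSL: the spectral moments $\chi_j$ of \eqref{eq:chi} and the error correlations $\psi_{\tau,\tau'}$. The representation $\bmu_t^i = \sum_{\tau}\sum_k r_\tau^{t,i}[k](\bA\bA^T)^k\bz_\tau$ turns $\bmu_t^i$ into a polynomial in $\bA\bA^T$ applied to the residuals $\bz_\tau$, so all remaining objects are traces of powers of $\bA\bA^T$ and bilinear forms in $\bz_\tau,\bz_{\tau'}$, which is exactly the structure that concentrates under Assumptions 1--2.

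For the divergence \eqref{eq:gamma_WS_CG}, I would first substitute the representation into $\gamma_t^{\tau,i} = \inv{N}\nabla_{\bs_\tau}\cdot\bA^T\bmu_t^i$ and use $\bz_{\tau'} = \by - \bA\bs_{\tau'}$, so that the only explicit $\bs_\tau$-dependence sits in the single term $\tau' = \tau$, where $\partial\bz_\tau/\partial\bs_\tau = -\bA$. The coefficients $r_\tau^{t,i}[k]$ also depend on $\bs_\tau$ through the scalars $a_t^j,b_t^j,\gamma_t^{\sigma,j}$, but by the standard AMP self-averaging argument their derivatives contribute only $o(1)$ to the $\inv{N}$-normalised divergence. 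What remains is $\inv{N}\sum_k r_\tau^{t,i}[k]\,Tr\{\bA^T(\bA\bA^T)^k\bA\} = \inv{N}\sum_k r_\tau^{t,i}[k]\,Tr\{(\bA\bA^T)^{k+1}\}$. Since $\bA\bA^T = \bU\bS\bS^T\bU^T$ with $\bU$ orthogonal, $Tr\{(\bA\bA^T)^{k+1}\} = Tr\{(\bS\bS^T)^{k+1}\}$, so by \eqref{eq:chi} each such trace converges to $\chi_{k+1}$; collecting the signs carried by the recursion defining $r_\tau^{t,i}$ then yields \eqref{eq:gamma_WS_CG}.

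For the variance, I would take \eqref{eq:v_h_WS} as the starting point, so the only object left to evaluate is $\Omega_t^i = \limN\inv{N}\|\bA^T\bmu_t^i\|^2 = \limN\inv{N}(\bmu_t^i)^T\bA\bA^T\bmu_t^i$. Substituting the representation produces a quadruple sum whose kernel is $\inv{N}\bz_\tau^T(\bA\bA^T)^{j+k+1}\bz_{\tau'}$. Writing $\bz_\tau = \by - \bA\bs_\tau = \bw - \bA\bq_\tau$ (using $\by = \bA\bx + \bw$ and $\bq_\tau = \bs_\tau - \bx$) and expanding, three contributions appear: the noise term $\inv{N}\bw^T(\bA\bA^T)^{j+k+1}\bw$ concentrates to $v_w\,\inv{N}Tr\{(\bA\bA^T)^{j+k+1}\} \to v_w\chi_{j+k+1}$; the error term $\inv{N}\bq_\tau^T\bA^T(\bA\bA^T)^{j+k+1}\bA\bq_{\tau'} = \inv{N}\bq_\tau^T(\bA^T\bA)^{j+k+2}\bq_{\tau'}$ concentrates, via the Haar average over $\bV$ in $\bA^T\bA = \bV\bS^T\bS\bV^T$, to $\psi_{\tau,\tau'}\chi_{j+k+2}$; and the two noise--error cross terms vanish. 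Summing over $\tau,\tau',j,k$ reproduces \eqref{eq:Omega}, and substituting $\Omega_t^i$ together with $C_t = \sum_\tau\gamma_t^{\tau,i}$ back into \eqref{eq:v_h_WS} gives \eqref{eq:v_h_WS_CG}.

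The main obstacle is the almost-sure evaluation of the error--error quadratic forms, namely establishing $\inv{N}\bq_\tau^T(\bA^T\bA)^{n}\bq_{\tau'} \as \psi_{\tau,\tau'}\chi_n$, together with the vanishing of the noise--error cross terms. Both rest on Assumption 1: because $\bV$ is Haar and independent of the remaining randomness, the rotation $\bV^T\bq_{\tau'}$ is asymptotically isotropic, so the bilinear form decouples into the trace $\inv{N}Tr\{(\bS^T\bS)^n\} \to \chi_n$ times the overlap $\inv{N}\bq_\tau^T\bq_{\tau'} \to \psi_{\tau,\tau'}$. Making this rigorous requires concentration of Haar-measure quadratic forms and the fact that, under the convergence-under-Gaussian-noise part of Assumption 2, the error vectors $\bq_\tau$ behave as if frozen relative to $\bV$; this is handled by the conditioning technique and carried out in full in \cite{OurFullPaper}. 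The analogous justification that the coefficient-derivative terms drop out of the divergence computation is the only other nontrivial point and follows the same self-averaging reasoning.
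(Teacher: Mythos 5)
Your proposal is correct and follows essentially the same route as the paper: this conference version defers the full proof to \cite{OurFullPaper}, but the text immediately preceding the theorem prescribes exactly your strategy — substitute the polynomial representation of $\bmu_t^i$ from Theorem \ref{theorem:bz_tau_to_mu} into \eqref{eq:gamma_alpha_general} and into \eqref{eq:v_h_WS}, then reduce every trace and bilinear form to the moments $\chi_j$ and overlaps $\psi_{\tau,\tau'}$ via Haar concentration, with the noise--error cross terms vanishing (the same fact the paper itself uses in \eqref{eq:psi_LSL_estimator}). The only loose end is your appeal to ``collecting the signs carried by the recursion'' to reconcile the $-\bA$ Jacobian of $\bz_\tau$ with the positive sign in \eqref{eq:gamma_WS_CG}; this is a sign-convention subtlety that the conference paper's own equations \eqref{eq:gamma_WS_CG} and \eqref{eq:q_A_mu_WS_CG} share (they carry a consistent global flip relative to the literal definition \eqref{eq:gamma_alpha_general}), so it does not affect the substance of your argument.
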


In practice, to implement \eqref{eq:chi} without the direct access to the eigenvalue spectrum of $\bA \bA^T$, one can use the Monte Carlo technique \cite{MC_matrix_moments_1, MC_matrix_moments_2, MC-divergence} to estimate $\chi_j$ by referring to $\bA$ only. Additionally, one can estimate the scalar $\psi_{\tau,\tau'}$ based on the following asymptotic identity \cite{OurFullPaper}

\noindent
\begin{equation}
    \lim_{N \rightarrow \infty} \inv{N} \bz_{\tau}^T \bz_{\tau'} - \delta v_w \as \psi_{\tau,\tau'} \label{eq:psi_LSL_estimator}
\end{equation}

While \eqref{eq:gamma_WS_CG} and \eqref{eq:v_h_WS_CG} provide the asymptotic identities for $\gamma_t^{\tau, i}$ and $v_{h_t}^i$ and can be used to study theoretical properties of WS algorithms for approximating \eqref{eq:LMMSE}, unfortunately in practice constructing estimators based on these results might lead to stability issues. For this reason, we suggest to use a different pair of estimates of those scalars. For $v_{h_t}$, we can utilize the property \eqref{eq:h_q_independence} of any OAMP/VAMP-based algorithms to obtain

\noindent
\begin{align}
    \lim_{N \rightarrow \infty} \inv{N} &||\br_t - \bs_t||^2 - v_{q_t} = \lim_{N \rightarrow \infty} \inv{N} ||\bh_t - \bq_t||^2 - v_{q_t} \nonumber\\
    &\overset{a.s.}{=} v_{h_t} + v_{q_t} - v_{q_t} = v_{h_t} \label{eq:robust_v_h_estimator}
\end{align}

\noindent
Thus, we suggest to use $\hat{v}_{h_t}^i = \inv{N} ||\br_t - \bs_t||^2 - \hat{v}_{q_t}$ as the estimator of $v_{h_t}^i$. To work out a more robust way of estimating $\gamma_t^{\tau, i}$, first, we can use \eqref{eq:y_measurements} and $\bs_t = \bx + \bq_t$ to show that $\bz_t = \bw - \bA \bq_t$. Then, by defining $\bZ_t = (\bz_0, ...,\bz_t)$, $(\bPsi_t)_{\tau,\tau'} = \psi_{\tau,\tau'}$ and $(\bgamma_t^i)_{\tau} = \gamma_t^{\tau,i}$, we can use \eqref{eq:q_A_mu_WS_CG} to obtain

\noindent
\begin{equation}
    \bgamma_t^i \as \lim_{N \rightarrow \infty} \bPsi_t^{-1} \big( \inv{N} \bZ_t^T \bmu_t^i - \inv{N} \bw^T \bmu_t^i \bone_t \big), \label{eq:Gamma_t_i_WS_CG}
\end{equation}

\noindent
where $\bone_t$ is a column vector of ones of dimension $t$ and the invertibility of $\bPsi_t$ was confirmed in \cite{CG_EP}. Since the product $\bZ_t^T \bmu_t^i$ is available, we can leverage the iterative structure of \eqref{eq:general_iterative_scheme} to work out the last unknown component $\inv{N} \bw^T \bmu_t^i$. For this, we define the following recursion

\noindent
\begin{gather}
    \nu_t^{i+1} = \nu_t^{i} + a_t^{i} \eta_t^{i} \label{eq:nu} \\
    \bsigma_t^{i+1} = \bPsi_t^{-1} \Big( \inv{N} \bZ_t^T \bmu_t^{i+1} - \nu_t^{i+1} \bone_t \Big) \label{eq:Gamma_t_i_WS_CG}\\
    \eta_t^{i+1} = v_w \Big(\delta - \frac{\nu_t^{i+1}}{v_{q_t}} - \big|\big|\bsigma_t^{i+1}\big|\big|_1 \Big) + b_t^{i} \eta_t^i \label{eq:eta}
\end{gather}

\noindent
with the initializations $\nu_0^0 = 0$, $\eta_0^0 = \delta v_w$ for $t=0$, and $\nu_t^0 = \nu_{t-1}^i$, $\eta_t^0 = v_w \big( \delta - \frac{\nu_{t-1}^i}{v_{q_t}} - \big|\big|\bsigma_{t-1}^i\big|\big|_1 \big) + b_{t-1}^{i-1} \eta_{t-1}^{i-1}$ for $t>0$. 

The following theorem establishes an alternative asymptotic identity for $\bgamma_t^i$.

\begin{theorem} \label{theorem:alternative_gamma_t_tau_estimator}
    
    \textit{\cite{OurFullPaper}}: Let $\bmu_t^i$ be the output of iterations \eqref{eq:general_iterative_scheme} with the initializations \eqref{eq:general_initialization}. Let $\bsigma_t^{i+1}$ be computed from \eqref{eq:nu}-\eqref{eq:eta} where $a_t^i$ and $b_t^i$ are finite scalars. Then, under Assumptions 1-2, we have that
    
    \noindent
    \begin{equation}
        \limN \bgamma_t^i \overset{a.s.}{=} \limN \bsigma_t^{i} \label{eq:iterative_gamma_WS_CG}
    \end{equation}

\end{theorem}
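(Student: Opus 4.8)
The plan is to show that the auxiliary recursion \eqref{eq:nu}--\eqref{eq:eta} is merely a bookkeeping device that tracks the single unknown scalar $\inv{N}\bw^T\bmu_t^i$ appearing in the exact identity \eqref{eq:Gamma_t_i_WS_CG} for $\bgamma_t^i$. Concretely, it suffices to prove that the iterate $\nu_t^i$ produced by \eqref{eq:nu} satisfies $\nu_t^i \as \inv{N}\bw^T\bmu_t^i$ in the large-system limit: once this holds, substituting it into the definition of $\bsigma_t^i$ and comparing with \eqref{eq:Gamma_t_i_WS_CG} gives $\bsigma_t^i \as \bgamma_t^i$ immediately, since $\bPsi_t$ is invertible and the term $\inv{N}\bZ_t^T\bmu_t^i$ is common to both. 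I would therefore introduce the ``true'' scalars $\tilde{\nu}_t^i := \inv{N}\bw^T\bmu_t^i$ and $\tilde{\eta}_t^i := \inv{N}\bw^T\bp_t^i$ and aim to show that $(\nu_t^i,\eta_t^i)$ and $(\tilde{\nu}_t^i,\tilde{\eta}_t^i)$ coincide asymptotically.

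First I would left-multiply the update \eqref{eq:general_iterative_scheme} by $\inv{N}\bw^T$. Using $\nabla f_t(\bmu)=\bW_t\bmu-\bz_t$, this yields exactly $\tilde{\nu}_t^{i+1}=\tilde{\nu}_t^i+a_t^i\tilde{\eta}_t^i$, matching \eqref{eq:nu}, together with $\tilde{\eta}_t^{i+1}=\inv{N}\bw^T\big(\bz_t-\bW_t\bmu_t^{i+1}\big)+b_t^i\tilde{\eta}_t^i$. The warm-start \eqref{eq:general_initialization} transfers the terminal values across outer iterations, and one checks directly that $\tilde{\nu},\tilde{\eta}$ inherit the same boundary data as $\nu,\eta$ (e.g. $\tilde{\nu}_0^0=\inv{N}\bw^T\bzero=0$ and $\tilde{\eta}_0^0=\inv{N}\bw^T\bz_0\as\delta v_w$). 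Hence the whole claim reduces to the single \emph{driving-term identity}
\begin{equation*}
    \inv{N}\bw^T\big(\bz_t-\bW_t\bmu_t^{i+1}\big) \as v_w\Big(\delta-\frac{\tilde{\nu}_t^{i+1}}{v_{q_t}}-\big|\big|\bgamma_t^{i+1}\big|\big|_1\Big),
\end{equation*}
which I would then establish by induction on $(t,i)$ with the hypothesis $\bsigma=\bgamma$ supplied by earlier steps.

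To prove this identity I would split the left-hand side into three pieces. Substituting $\bz_t=\bw-\bA\bq_t$ and $\inv{N}\|\bw\|^2\as\delta v_w$ gives the $\delta v_w$ term; writing $\bW_t=\rho_t\bI+\bA\bA^T$ with $\rho_t=v_w/v_{q_t}$ produces $-\rho_t\tilde{\nu}_t^{i+1}=-v_w\tilde{\nu}_t^{i+1}/v_{q_t}$; and the remaining piece $-\inv{N}\bw^T\bA\bA^T\bmu_t^{i+1}$ is expanded with the polynomial representation of Theorem \ref{theorem:bz_tau_to_mu}. Inserting \eqref{eq:bz_tau_to_bmu} and again using $\bz_\tau=\bw-\bA\bq_\tau$, the ``fresh-noise'' quadratic contributions $\inv{N}\bw^T(\bA\bA^T)^{k+1}\bw\as v_w\chi_{k+1}$ reassemble, through the closed form \eqref{eq:gamma_WS_CG}, into exactly $-v_w\sum_\tau\gamma_t^{\tau,i+1}=-v_w\|\bgamma_t^{i+1}\|_1$.

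The hard part is the leftover cross terms, namely $\inv{N}\bw^T\bA\bq_t$ and the family $\inv{N}\bw^T(\bA\bA^T)^{k+1}\bA\bq_\tau$, all of which must vanish even though every error vector $\bq_\tau$ depends on the noise realization $\bw$ through the algorithm. I would argue these are asymptotically zero using the underlying SE/RMT machinery: conditioning on the $\sigma$-algebra generated by the past iterates and averaging over the Haar factor $\bV$ of $\bA$ decouples the ``fresh'' component of $\bw$ from the history-dependent $(\bA\bA^T)^k\bA\bq_\tau$, exactly as in the derivation of \eqref{eq:q_A_mu_WS_CG}. A useful consistency check for the base case $k=0$ is that $\inv{N}\bw^T\bA\bq_\tau\as 0$ is forced by combining \eqref{eq:psi_LSL_estimator} with $\inv{N}\bq_\tau^T\bA^T\bA\bq_{\tau'}\as\psi_{\tau,\tau'}$, which itself follows from the Haar invariance of $\bV$ together with the normalization $\inv{N}Tr\{\bS\bS^T\}=1$ in Assumption~1. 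Granting these orthogonality relations, the three pieces assemble into the claimed identity, the induction closes, and $\bsigma_t^i\as\bgamma_t^i$ follows.
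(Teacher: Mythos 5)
Your proposal is correct and takes essentially the same approach the paper intends: the recursion \eqref{eq:nu}--\eqref{eq:eta} is constructed precisely so that $\nu_t^i$ and $\eta_t^i$ track $\inv{N}\bw^T\bmu_t^i$ and $\inv{N}\bw^T\bp_t^i$, and the argument (which the paper defers to \cite{OurFullPaper}) is the induction you describe, combining \eqref{eq:Gamma_t_i_WS_CG}, the representation of Theorem \ref{theorem:bz_tau_to_mu} together with $\inv{N}\bw^T(\bA\bA^T)^{k+1}\bw \as v_w\chi_{k+1}$ and \eqref{eq:gamma_WS_CG} to produce the $v_w\big|\big|\bgamma_t^{i+1}\big|\big|_1$ term, and the SE/Haar-invariance machinery to kill the $\bw$--$\bq_\tau$ cross terms. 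Nothing in your sketch deviates from that route, and you correctly isolate the cross-term orthogonality as the only step requiring the heavy conditioning argument.
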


Thus, in practice we suggest to use the empirical version of \eqref{eq:iterative_gamma_WS_CG} where $v_w$, $v_{q_t}$ and $\bPsi_t$ are substituted by their estimated versions. Importantly, the resulting estimator does not explicitly use the moments $\chi_j$, which leads to higher robustness of the resulting MP algorithm utilizing \eqref{eq:general_iterative_scheme} with \eqref{eq:general_initialization} as will be shown next.

\section{Numerical Experiments}

In this section we numerically compare the performance of the LM MP algorithms utilizing WS-CG and WS-GD for approximating \eqref{eq:LMMSE} against VAMP \cite{VAMP}, regular CG-VAMP \cite{CG_EP, OurFullPaper} and the optimized version of MAMP \cite{MAMP_full}. For this, we consider a problem of recovering a natural image 'man' of dimension $1024$ by $1024$ measured by a Fast ill-conditioned Johnson-Lindenstrauss Transform (FIJL). Here, the FIJL operator $\bA = \bS \bP \bH \bD$ is composed of \cite{NS-VAMP}: a diagonal matrix $\bD$ with random values either $-1$ or $1$ with equal probability; a Discreet Cosine Transform matrix $\bH$; a random permutation matrix $\bP$ and an $M$ by $N$ matrix $\bS$ of zeros except for the main diagonal, where the singular values follow the geometric progression leading to the desired condition number. Note that 

\noindent
\begin{equation}
    \bA \bA^T = \bS \bP \bH \bD \bD^T \bH^T \bP^T \bS^T = \bS \bS^T
\end{equation}

\noindent
is diagonal, which implies we can use oracle information about $\bS \bS^T$ to compute the LMMSE estimator \eqref{eq:LMMSE} and implement VAMP as a benchmark. However, here we emphasize that apart from VAMP, all other algorithms do not utilize this knowledge and treat $\bA$ as a general matrix. 

In the following experiments, we set the condition number of $\bA$ to be $\kappa(\bA) = 10^3$, subsampling factor $\delta = 0.05$ and the measurement noise variance $v_w$ to achieve SNR $\frac{||\bx||^2}{||\bw||^2}$ of $40dB$. All the algorithms use the same update of $\bs_{t+1}$ from \eqref{eq:s_t}, where we implement the BM3D denoiser \cite{BM3D}. Additionally, for all the algorithms we estimate the divergence $\alpha_t$ from \eqref{eq:gamma_alpha_general} using the method proposed in \cite{Our_divergence_paper}, apart from MAMP case for which we use the Black-Box Monte Carlo (BB-MC) method \cite{MC-divergence} to improve the convergence properties. The method from \cite{Our_divergence_paper} allows executing the denoiser $\bg(\cdot)$ half as many times as BB-MC, which effectively leads to an acceleration of the algorithm by almost factor of two. Furthermore, for those algorithms that require the access to the spectral moments $\chi_j$ from \eqref{eq:chi}, we generate their estimates $\Hat{\chi}_j$ using the MC method from \cite{MC_matrix_moments_1, MC_matrix_moments_2, MC-divergence} with $10^3$ MC trials.

In the first experiment we compare the stability of two pairs of WS-CG-VAMP and WS-GD-VAMP utilizing WS-CG and WS-GD from section \ref{sec:LM_LMMSE_approximators} with $i=1$ and $i=10$. The first pair uses \eqref{eq:gamma_WS_CG} and \eqref{eq:v_h_WS_CG} from Theorem \ref{theorem:WS_CG_correction_and_SE} with the oracle values of $\chi_j$, $v_w$ and $\psi_{\tau,\tau'}$ substituted by their estimated versions $\Hat{\chi}_j$, $\hvw$ and $\Hat{\psi}_{\tau,\tau'}$. We refer to this pair as \textit{WS-CG-VAMP A} and \textit{WS-GD-VAMP A} respectively. The second pair, which will be referred as \textit{WS-CG-VAMP B} and \textit{WS-GD-VAMP B}, utilizes the alternative estimators based on \eqref{eq:robust_v_h_estimator} and \eqref{eq:iterative_gamma_WS_CG}. For both WS-GD-VAMP A and WS-GD-VAMP B with $i=1$, we employ the same damping strategy of length $l=3$ for $\bs_t$ as has been proposed in MAMP \cite{MAMP_full}. This was necessary since without damping the algorithms became unstable. Here, and further in MAMP, the damping of length $l$ computes the linear combination of the last $l$ vectors $\big\{ \bs_{\tau} \big\}_{\tau = t-l+1}^t$ where the weights are optimized to minimize an estimate of the variance $v_{q_t}$ of the error $\bq_{t} = \bs_{t} - \bx$. To evaluate the performance, we compute the oracle Normalized Mean Squared Error (NMSE) $\frac{||\bg(\br_t) - \bx||^2}{||\bx||^2}$ of the algorithms. The NMSE of the two pairs of algorithms averaged over $10$ realizations is shown on Figure \ref{fig:WS_A_B_comparison}. As seen from the plot, the \textit{A} version of all the algorithms eventually becomes unstable, which we relate to high sensitivity of the estimators based on \eqref{eq:gamma_WS_CG} and \eqref{eq:v_h_WS_CG} to the estimation error of the parameters $v_w$ and $\psi_{\tau,\tau'}$. A more detailed discussion of this nuance can be found in the full work \cite{OurFullPaper}. On the other hand, the \textit{B} version of the algorithms demonstrates stable and consistent progression for a larger $t$. Additionally, we see that using WS-CG leads to a faster dynamics compared to using WS-GD, so in the following experiments we stick to WS-CG-VAMP \textit{B}.

\begin{figure}
\centering
\includegraphics[width=0.45\textwidth]{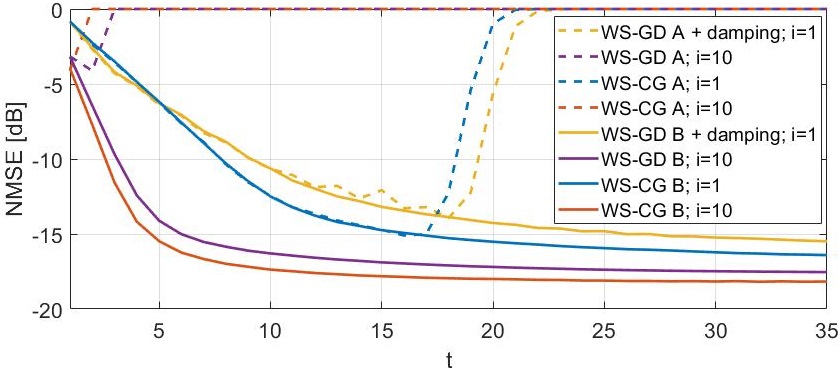}
\caption{NMSE over iteration number $t$ for the LM MP algorithms \eqref{eq:r_t}-\eqref{eq:s_t} with different WS approximators.}
\label{fig:WS_A_B_comparison}
\end{figure}

In the next experiment, we compare WS-CG-VAMP against VAMP, CG-VAMP and the optimized version of MAMP that uses the damping of length $l=3$ to ensure stability of the algorithm. Additionally, we implement two MAMP algorithms: one that uses the ground truth moments $\chi_j$ computed through \eqref{eq:chi} assuming the oracle access to $\bS \bS^T$ and one with the estimated moments $\Hat{\chi}_j$. The NMSE of the algorithms averaged over $10$ iterations is shown on Figure \ref{fig:different_algorithms}. As demonstrated on the plot, WS-CG-VAMP with $i=10$ gets close to the fixed point of VAMP within $20$ outer-loop iterations, while the same algorithm but with $i=1$ has a consistent, but slower convergence and will take hundreds of iterations to achieve the same accuracy level. Similar dynamics are exhibited for MAMP with access to the oracle moments $\chi_j$, while the practical version of the same algorithm converges to an inferior fixed point. Although doing fewer inner-loop iterations leads to a faster computation of $\bmu_t^i$, after some $t$ the summation of vectors $\bs_{\tau}$ in \eqref{eq:r_t} becomes as expensive as doing multiple inner-loop iterations of WS-CG or WS-GD when $\bA$ has a fast implementation. Additionally, in some applications like imaging, executing the denoiser has a substantial computational portion of one outer-loop iteration. As a result, one might want avoid doing a large number of outer-loop iterations and balance the workload between the linear and the denoising steps. WS-CG-VAMP provides such an option.

\begin{figure}
\centering
\includegraphics[width=0.45\textwidth]{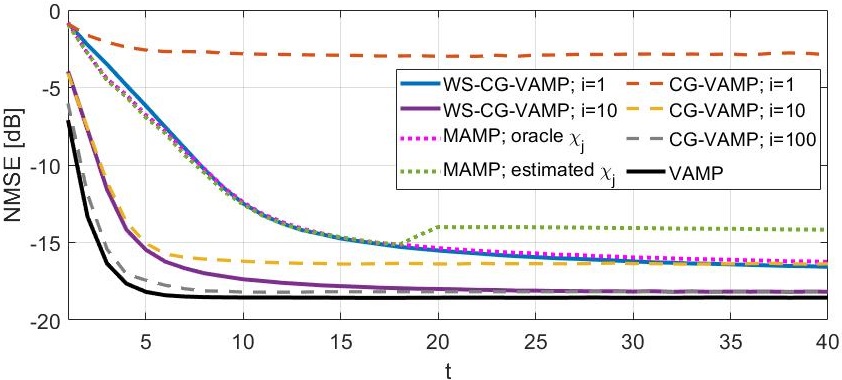}
\caption{NMSE over iteration number $t$ for different Message Passing algorithms.}
\label{fig:different_algorithms}
\end{figure}

\section{Conclusions}

In this work we proposed an iterative WS framework for approximating the computationally intractable LMMSE estimator to upscale the VAMP/OAMP algorithm. When an LM MP algorithm uses an instance of this framework, such as WS-GD or WS-CG, with an arbitrary number of inner-loop iterations, the fixed point of such an MP algorithm is identical to the one of VAMP. Additionally, we provided the asymptotic expressions for the Onsager correction and the SE of the WS framework, and proposed fast and robust methods for estimating those parameters without any knowledge about the singular spectrum of $\bA$. A special case of an algorithm employing the WS framework, WS-CG-VAMP, demonstrates fast and stable convergence even when the measurement operator is highly undersampled and ill-conditioned. Lastly, we show that the approximation method used in MAMP \cite{MAMP_full, MAMP_short} can be interpreted as a particular instance of the WS-GD algorithm with $1$ inner-loop iteration.

\renewcommand*{\bibfont}{\footnotesize}
\printbibliography

\end{document}